\DeclareFontFamily{U}{wncy}{}
\DeclareFontShape{U}{wncy}{m}{n}{<->wncyss10}{}
\DeclareSymbolFont{mcy}{U}{wncy}{m}{n}
\DeclareMathSymbol{\sh}{\mathord}{mcy}{"78}
\newcommand{\wlp}{\mathsf{wlp}}
\newcommand{\test}{\mathsf{test}}
\newcommand{\eval}{\mathsf{eval}}
\newcommand{\Id}{\mathsf{Id}}
\newcommand{\edn}{\mathsf{end}}
\newcommand{\unchanged}{\mathsf{unchanged}}
\newcommand{\preserves}{\mathsf{preserves}}
\newtheorem{thm}{Theorem}
\begin{document}

\tikzstyle{elem} = [circle]
\tikzstyle{line} = [draw,thick, -latex']
\tikzstyle{rel} = [draw,thin,dashed, -latex']

\title{Algebraic Principles for Rely-Guarantee Style Concurrency Verification Tools}

\author{Alasdair Armstrong \and Victor B.~F.~Gomes \and Georg Struth}

\institute{Department of Computer Science, University of Sheffield, UK\\
\email{$\{$a.armstrong,v.gomes,g.struth$\}$@dcs.shef.ac.uk}}

\maketitle

\begin{abstract}
  We provide simple equational principles for deriving
  rely-guarantee-style inference rules and refinement laws based on
  idempotent semirings. We link the algebraic layer with concrete
  models of programs based on languages and execution traces. We have
  implemented the approach in Isabelle/HOL as a lightweight
  concurrency verification tool that supports reasoning about the
  control and data flow of concurrent programs with shared variables
  at different levels of abstraction. This is illustrated on two
  simple verification examples.
\end{abstract}

\pagestyle{plain}

\section{Introduction}

Extensions of Hoare logics are becoming increasingly important for the
verification and development of concurrent and multiprocessor
programs. One of the most popular extensions is Jones' rely-guarantee
method~\cite{jones_development_1981}.  A main benefit of this method is
compositionality: the verification of large concurrent programs can be
reduced to the independent verification of individual subprograms. The
effect of interactions or interference between subprograms is captured
by \emph{rely} and \emph{guarantee} conditions. Rely conditions describe
the effect of the environment on an individual subprogram. Guarantee conditions,
in turn, describe the effect of an individual subprogram on the
environment. By constraining a subprogram by a rely condition, the global
effect of interactions is captured locally.

To make this method applicable to concrete program development and
verification tasks, its integration into tools is essential. To
capture the flexibility of the method, a number of features seem
desirable. First, we need to implement solid mathematical models for
fine-grained program behaviour. Second, we would like an abstract
layer at which inference rules and refinement laws can be derived
easily. Third, a high degree of proof automation is mandatory for the
analysis of concrete programs. In the context of the rely-guarantee
method, tools with these important features are currently missing.

This paper presents a novel approach for providing such a tool
integration in the interactive theorem proving environment
Isabelle/HOL. At the most abstract level, we use algebras to reason
about the control flow of programs as well as for deriving inference
rules and refinement laws. At the most concrete level, detailed models
of program stores support fine-grained reasoning about program data
flow and interference. These models are then linked with the
algebras. Isabelle allows us to implement these layers in a modular
way and relate them formally with one another. It not only provides us
with a high degree of confidence in the correctness of our
development, it also supports the construction of custom proof tactics
and procedures for program verification and refinement tasks.

For sequential programs, the applicability of algebra, and Kleene
algebra in particular, has been known for decades. Kleene algebra
provides operations for non-deterministic choice, sequential
composition and finite iteration, in addition to skip and abort. With
appropriate extensions, Kleene algebras support Hoare-style
verification of sequential programs, and allow the derivation of
program equivalences and refinement
rules~\cite{kozen_kleene_1997,hoare_concurrent_2011}. Kleene algebras
have been used in applications including compiler optimisation,
program construction, transformation and termination analysis, and
static analysis. Formalisations and tools are available in interactive
theorem provers such as Coq~\cite{pous_kleene_2013} and
Isabelle~\cite{armstrong_kleene_2013,armstrong_program_2013,armstrong_algebras_2014}. A
first step towards an algebraic description of rely-guarantee based
reasoning has recently been undertaken~\cite{hoare_concurrent_2011}.

The main contributions of this paper are as follows. First, we
investigate algebraic principles for rely-guarantee style
reasoning. Starting from~\cite{hoare_concurrent_2011} we extract a basic
minimal set of axioms for rely and guarantee conditions which suffice
to derive the standard rely-guarantee inference rules. These axioms
provide valuable insights into the conceptual and operational role of
these constraints. However, algebra is inherently compositional, so it
turns out that these axioms do not fully capture the semantics of
interference in execution traces. We therefore explore how the
compositionality of these axioms can be broken in the right way, so as to capture the
intended trace semantics.

Second, we link our rely-guarantee algebras with a simple trace based
semantics which so far is restricted to finite executions and
disregards termination and synchronisation. Despite the simplicity of
this model, we demonstrate and evaluate our prototypical verification
tool implemented in Isabelle by verifying two simple examples from
the literature. Beyond that our approach provides a coherent framework
from which more complex and detailed models can be implemented in the
future.

Third, we derive the usual inference rules of the rely-guarantee
method with the exception of assignment axioms directly from the
algebra, and obtain assignment axioms from our models. Our
formalisation in Isabelle allows us to reason seamlessly across these
layers, which includes the data flow and the control flow of concurrent
programs.

Taken together, our Isabelle implementation constitutes a tool
prototype for the verification and construction of concurrent
programs. We illustrate the tool with two simple examples from the
literature. The complete Isabelle code can be found
online\footnote{\url{www.dcs.shef.ac.uk/~alasdair/rg}}. A
previous Isabelle implementation of rely-guarantee reasoning is due to
Prensa Nieto~\cite{nieto_rely-guarantee_2003}. Our implementation
differs both by making the link between concrete programs and algebras
explicit, which increases modularity, and by allowing arbitrary nested parallelism.

\section{Algebraic Preliminaries}
\label{sec:KA}

Rely-guarantee algebras, which are introduced in the following
section, are based on dioids and Kleene algebras. A \emph{semiring} is
a structure $(S,+,\cdot,0,1)$ such that $(S,+,0)$ is a commutative
monoid, $(S,\cdot, 1)$ is a monoid and the distributivity laws $x\cdot
(y+z)=x\cdot z + y \cdot z$ and $(x+y)\cdot z = x\cdot z+y\cdot z$ as
well as the annihilation laws $x\cdot 0=0$ and $0\cdot x=0$ hold. A
\emph{dioid} is a semiring in which addition is idempotent:
$x+x=x$. Hence $(S,+,0)$ forms a join semilattice with least element
$0$ and partial order defined, as usual, as $x\le y\Leftrightarrow
x+y=y$. The operations of addition and multiplication are isotone with
respect to the order, that is, $x \le y $ implies $z+x\le z+y$,
$z\cdot x \le z\cdot y$ and $x\cdot z \le y\cdot z$. A dioid is
\emph{commutative} if multiplication is: $x\cdot y = y \cdot x$.

In the context of sequential programs, one typically thinks of $\cdot$
as sequential composition, $+$ as nondeterministic choice, $0$ as the
abortive action and $1$ as skip. In this context it is essential that
multiplication is not commutative. Often we use $;$ for sequential
composition when discussing programs. More formally, it is well known
that (regular) languages with language union as $+$, language
product as $\cdot$, the empty language as $0$ and the empty word
language $\{\varepsilon\}$ as $1$ form dioids. Another model is formed
by binary relations with the union of relations as $+$, the product of
relations as $\cdot$, the empty relation as $0$ and the identity
relation as $1$. A model of commutative dioids is formed by sets of
(finite) multisets or Parikh vectors with multiset addition as
multiplication.

It is well known that commutative dioids can be used for modelling the
interaction between concurrent composition and nondeterministic
choice. The following definition serves as a basis for models of
concurrency in which sequential and concurrent composition
interact.

A \emph{trioid} is a structure $(S,+,\cdot,||,0,1)$ such that
$(S,+,\cdot,0,1)$ is a dioid and $(S,+,||,0,1)$ a commutative
dioid. In a trioid there is no interaction between the sequential
composition $\cdot$ and the parallel composition $||$. On the one
hand, Gischer has shown that trioids are sound and complete for the
equational theory of series-parallel pomset languages~\cite{gischer_equational_1988},
which form a well studied model of true concurrency. On the other
hand, he has also obtained a completeness result with respect to a
notion of pomset subsumption for trioids with the additional
\emph{interchange axiom} $(w \| x) \cdot (y \| z) \le (w \cdot y) \|
(x \cdot z)$ and it is well known that this additional axiom also
holds for (regular) languages in which $||$ is interpreted as the
shuffle or interleaving operation~\cite{gischer_shuffle_1981}.

Formally, the \emph{shuffle} $\|$ of two finite words is defined
inductively as $\epsilon \| s = \{s\}$, $s \| \epsilon = \{s\}$, and
$as \| bt = a(s \| bt) \cup b(as \| t)$, which is then lifted to the
shuffle product of languages $X$ and $Y$ as $X \| Y = \{x \| y:\; x
\in X \land x \in Y\}$.

For programming, notions of iteration are essential. A \emph{Kleene
  algebra} is a dioid expanded with a star operation which satisfies
both the \emph{left unfold axiom} $1 + x\cdot x^\star\le x^\star$ and
\emph{left} and \emph{right induction axioms} $z+x \cdot y \le
y\Rightarrow x^\star\cdot z \le y$ and $z + y\cdot x \le y \Rightarrow
z\cdot x^\star \le y$. It follows that $1+x\cdot x^\star = x^\star$
and that the right unfold axiom $1+x^\star\cdot x\le x^\star$ is
derivable as well. Thus iteration $x^\ast$ is modelled as the least
fixpoint of the function $\lambda y. 1+x\cdot y$, which is the same as
the least fixpoint of $\lambda y.1+ y\cdot x$. A \emph{commutative
  Kleene algebra} is a Kleene algebra in which multiplication is
commutative.

It is well known that (regular) languages form Kleene algebras and
that (regular) sets of multisets form commutative Kleene algebras. In
fact, Kleene algebras are complete with respect to the equational
theory of regular languages as well as the equational theory of binary
relations with the reflexive transitive closure operation as the
star~\cite{kozen_completeness_1994}. Moreover, commutative Kleene
algebras are complete with respect to the equational theory of regular
languages over multisets~\cite{conway_regular_1971}. It follows that
equations in (commutative) Kleene algebras are decidable.

A \emph{bi-Kleene algebra} is a structure
$(K,+,\cdot,||,0,1,\hphantom{}^\star,\hphantom{}^{(\star)})$ where
$(K,+,\cdot,0,1,\hphantom{}^\star)$ is a Kleene algebra and
$(K,+,||,0,1,\hphantom{}^{(\star)})$ is a commutative Kleene
algebra. Bi-Kleene algebras are sound and complete with respect to the
equational theory of regular series-parallel pomset languages, and the
equational theory is again decidable~\cite{laurence_completeness_2013}. A
\emph{concurrent Kleene algebra} is a bi-Kleene algebra which
satisfies the interchange law~\cite{hoare_concurrent_2011}. It can be shown that
shuffle languages and regular series-parallel pomset languages with a
suitable notion of pomset subsumption form concurrent Kleene algebras.

In some contexts, it is also useful to add a meet operation $\sqcap$ to a bi-Kleene algebra,
such that $(K,+,\sqcap)$ is a distributive lattice. This is
particularly needed in the context of refinement, where we typically
want to represent specifications as well as programs.


\section{Generalised Hoare Logics  in Kleene Algebra}

It is well known that the inference rules of sequential Hoare logic
(except the assignment axiom) can be derived in expansions of Kleene
algebras. One approach is as
follows~\cite{moller_algebras_2006}. Suppose a suitable Boolean
algebra $B$ of \emph{tests} has been embedded into a Kleene algebra
$K$ such that $0$ and $1$ are the minimal and maximal element of $B$,
$+$ corresponds to join and $\cdot$ to meet. Complements $-$ are
defined only on $B$. Suppose further that a \emph{backward diamond
  operator} $\langle x|p$ has been defined for each $x\in K$ and $p\in
B$, which models the set of all states to which each terminating
execution of program $x$ may lead from states $q$. Finally suppose
that a \emph{forward box operator} $|x]p$ has been defined which
models the (largest) set of states from which every terminating
execution of $x$ must end in states $p$ and that boxes and diamonds
are adjoints of the Galois connection $\langle x|p \le q
\Leftrightarrow p \le |x]q$, for all $x\in K$ and $p,q\in B$. It is
then evident from the above explanations that validity of a Hoare
triple $\vdash \{p\}x\{q\}$ can be encoded as $\langle x|p \le q$ and
the weakest liberal precondition operator $\mathsf{wlp}(x,q)$ as
$|x]p$. Hence the relationship between the proof theory and the
semantics of Hoare logic is captured by the Galois connection $
\vdash\{p\}x\{q\} \Leftrightarrow p \le \mathsf{wlp}(x,q)$. It has
been shown that the relational semantics of sequential while-programs
can be encoded in these \emph{modal Kleene algebras} and that the
inference rules of Hoare logic can be
derived~\cite{moller_algebras_2006}.

In the context of concurrency, this relational approach is no longer
appropriate; the following approach by
Tarlecki~\cite{tarlecki_language_1985} can be used instead. One can
now encode validity of a Hoare triple as
\begin{equation*}
  \vdash \{x\}y\{z\} \Leftrightarrow x \cdot y \le z
\end{equation*}
for arbitrary elements of a Kleene algebra. Nevertheless all the
rules of sequential Hoare logic except the assignment axiom can still
be derived~\cite{hoare_concurrent_2011}. Tarlecki's motivating
explanations carry over to the algebraic approach.

As an example we show the derivation of a generalised while
rule. Suppose $x\cdot t\cdot y\le x$. Then $x\cdot (t\cdot y)^\ast \le
x$ by the right induction axiom of Kleene algebra and therefore
$x\cdot (t\cdot y)^\ast t'\le x\cdot t'$ for arbitrary element $t'$ by
isotonicity of multiplication. This derives the while rule
\begin{equation*}
  \frac{\vdash \{x\cdot t\}y\{x\}}{\vdash \{x\}(t\cdot y)^\ast\{t'\cdot x\}}
\end{equation*}
for a generalised while loop $(t\cdot y)^\ast\cdot t'$, which
specialises to the conventional rule when $t$ and $t'$ are, in some
sense, complements.

The correspondence to a wlp-style semantics, as in modal Kleene
algebra, now requires a generalisation of the Galois connection for
boxes and diamonds to multiplication and an upper adjoint in the form
of residuation.  This can be achieved in the context of \emph{action
  algebras}~\cite{pratt_action_1990}, which expand Kleene algebras by operations
of left and right residuation defined by the Galois connections
\begin{equation*}
  x\cdot y\le z\Leftrightarrow x\le z\leftarrow y,\qquad x\cdot y\le z \Leftrightarrow y\le x\rightarrow z.
\end{equation*}
These residuals, and now even the Kleene star, can be axiomatised
equationally in action algebras. For a comprehensive list of the
properties of action algebras and their most important models
see~\cite{armstrong_kleene_2013}, including the language and the relational
model. In analogy to the development in modal Kleene algebra we can
now stipulate $ \wlp(x,y)=y\leftarrow x$ and obtain the Galois
connection
\begin{equation*}
  \vdash \{x\}y\{z\} \Leftrightarrow x\le \wlp(y,z)
\end{equation*}
with $\vdash \{\wlp(y,z)\}y\{z\}$ and $x\le \wlp(y,z)\Rightarrow \
\vdash \{x\}y\{z\}$ as characteristic properties. Moreover, if the
action algebra is also a quantale, and infinite sums exist, it follows
that $ \wlp(y,z)=\sum\{x:\ \vdash \{x\}y\{z\}\}$.  It is obvious that
this definition makes sense in all models of action algebras and
quantales. Intuitively, suppose $p$ stands for the set of all
behaviours of a system, for instance the set of all execution traces,
that end in state $p$, and likewise for $q$. Then $\{p\}x\{q\}$ states
that all executions ending in $p$ can be extended by $x$ to executions
ending in $q$. $\wlp(x,q)$ is the most general behaviour, that is the
set of all executions $p$ after which all executions of $x$ must end
in $q$.

A residuation for concurrent composition can be considered as well:
\begin{equation*}
x\|y \le z \Leftrightarrow y \le x/z.
\end{equation*}
The residual $x/z$ represents the weakest program such that when
placed in parallel with $x$, the parallel composition behaves as $z$.


\section{A  Rely-Guarantee Algebra}
\label{sec:RG}

We now show how bi-Kleene algebras can be expanded into a
simple algebra that supports the derivation of rely-guarantee style
inference rules. This development does \emph{not} use the interchange
law for several reasons. First, this law fails for fair parallel
composition $x\parallel_f y$ in models with possibly infinite, or
non-terminating programs. In this model, $x \cdot y \not\leq x
\parallel_f y$ whenever $x$ is non-terminating. Secondly, it is not
needed for deriving the usual rules of rely-guarantee.

A rely-guarantee algebra is a structure
$(K,I,+,\sqcap,\cdot,\|,\hphantom{}^\star,0,1)$, where $(K,+,\sqcap)$
is a distributive lattice, $(K,+,\cdot,\|,0,1)$ is a trioid and
$(K,+,\sqcap,\cdot,\|,\hphantom{}^\star,0,1)$ is a bi-Kleene algebra
where we do not consider the parallel star. $I$ is a distinguished
subset of rely and guarantee conditions or \emph{interference
  constraints} which satisfy the following axioms
\begin{align}
r\|r &\le r, \label{rg1}\\
r &\le r\|r', \label{rg2}\\
r\|(x\cdot y) &= (r\|x)\cdot(r\|y), \label{rg3}\\
r\|x^+ &\le (r\|x)^+ \label{rg4}.
\end{align}
By convention, we use $r$ and $g$ to refer to elements of $I$,
depending on whether they are used as relies or guarantees, and
$x,y,z$ for arbitrary elements of $K$. The operations $\|$ and $\sqcap$
must be closed with respect to $I$.

The general idea is to constrain a program by a rely condition by
executing the two in parallel. Axiom (\ref{rg1}) states that
interference from a constraint being run twice in parallel is no
different from just the interference from that constraint begin run
once in parallel. Axiom (\ref{rg2}) states that interference from a
single constraint is less than interference from itself and another
interference constraint. Axiom (\ref{rg3}) allows an interference
constraint to be split across sequential programs. Axiom (\ref{rg4})
is similar to Axiom (\ref{rg3}) in intent, except it deals with finite
iteration.

Some elementary consequences of these rules are
\begin{equation*}
1 \le r,\qquad
r^\star = r\cdot r = r = r\|r,\qquad
r\|x^+ = (r\|x)^+.
\end{equation*}

\begin{thm}
  Axioms (\ref{rg1}), (\ref{rg2}) and (\ref{rg3}) are independent.
\end{thm}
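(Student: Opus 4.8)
The plan is to prove independence in the standard way: for each of the three axioms in turn I exhibit a structure that satisfies every other defining condition of a rely-guarantee algebra --- the distributive-lattice, trioid and Kleene-algebra requirements, the remaining two of the three axioms, and also~(\ref{rg4}) --- but violates the axiom in question. Because the ambient requirements are numerous, the convenient route is to search for the witnesses with Isabelle's Nitpick on the appropriate reducts and then certify them; in fact all three can be taken to be small finite chains, each displayed as a few operation tables, so both the search and the certification are cheap.

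For~(\ref{rg2}) the counterexample is essentially forced. If $0\in I$ then $r\|0=0$, so $r\le r\|r'$ with $r'=0$ would force $r\le 0$; hence~(\ref{rg2}) fails in \emph{any} non-degenerate structure in which $0$ is an interference constraint. Concretely, take the two-element Boolean algebra with both $\cdot$ and $\|$ interpreted as conjunction and $I=\{0,1\}$: there~(\ref{rg1}) is trivial, and~(\ref{rg3}) holds because $\cdot$ and $\|$ coincide and are idempotent, commutative and associative.

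For~(\ref{rg1}) one first observes that $\cdot$ and $\|$ must genuinely differ: if $\cdot=\|$, then instantiating~(\ref{rg3}) at $x=y=1$ yields $r=r\|r$, so~(\ref{rg1}) holds automatically. A working witness is the four-element chain $0<1<a<b$ with $+=\max$ and $\sqcap=\min$, with $\cdot$ the operation that returns $0$ if either argument is $0$ and $\max$ otherwise, and with $\|$ agreeing with $\cdot$ except that $a\|a=b$; put $I=\{a,b\}$. Then $I$ is closed under $\|$ and $\sqcap$, axiom~(\ref{rg2}) holds because $r\|r'=b$ for all $r,r'\in I$, axioms~(\ref{rg3}) and~(\ref{rg4}) follow from a short finite case distinction, but~(\ref{rg1}) fails since $a\|a=b\not\le a$.

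The case of~(\ref{rg3}) is the one I expect to be the real obstacle, precisely because $\|$ must remain idempotent on $I$ (to keep~(\ref{rg1}) and~(\ref{rg2})) while still failing to distribute over $\cdot$. My proposed witness is the five-element chain $0<1<u_1<u_2<u_3$ with $+=\max$ and $\sqcap=\min$, with $\cdot$ the truncated addition $u_i\cdot u_j=u_{\min(i+j,3)}$ ($1$ a unit, $0$ absorbing) and with $\|$ the join $u_i\|u_j=u_{\max(i,j)}$ ($1$ a unit, $0$ absorbing); put $I=\{u_2\}$. Here $\|$ is idempotent, so~(\ref{rg1}) and~(\ref{rg2}) hold; (\ref{rg4}) again reduces to a finite check and holds with room to spare, since $(u_2\|x)^+\in\{0,u_3\}$, equalling $0$ only when $x=0$, where $u_2\|x^+=0$ as well. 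But~(\ref{rg3}) fails: taking $r=u_2$, $x=u_1$ and $y=1$ gives $r\|(x\cdot y)=u_2\|u_1=u_2$, whereas $(r\|x)\cdot(r\|y)=u_2\cdot u_2=u_3$. In all three constructions the only non-routine part of the verification is checking that the star operation exists with the required unfold and induction properties and that distributivity holds; since the carriers are small chains this is mechanical, which is why automation carries the load.
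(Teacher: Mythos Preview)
Your approach is correct and coincides in spirit with the paper's: both establish independence by exhibiting, for each axiom, a small finite model of the remaining structure in which that axiom fails. The paper's proof consists of a single sentence reporting that such models were found with Nitpick; no witnesses are displayed. You go further and write the models out explicitly --- a two-element Boolean algebra for~(\ref{rg2}), a four-element chain for~(\ref{rg1}), and a five-element chain for~(\ref{rg3}) --- together with short hand verifications of the ambient trioid, Kleene and lattice axioms, the remaining interference axioms, and~(\ref{rg4}). I have checked your three models and they do work; in particular the associativity and distributivity of your perturbed~$\|$ in the~(\ref{rg1}) witness hold, and your truncated-addition product in the~(\ref{rg3}) witness really does break~(\ref{rg3}) at $r=u_2$, $x=u_1$, $y=1$ while keeping~(\ref{rg4}). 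Your side observation that $\cdot=\|$ forces~(\ref{rg1}) from~(\ref{rg3}) is also correct and a nice piece of motivation that the paper does not mention. So the route is the same, but your version is self-contained where the paper defers to the tool.
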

\begin{proof}
  We have used Isabelle's
  \emph{Nitpick}~\cite{blanchette_nitpick:_2010} counterexample
  generator to construct models which violate each particular axiom
  while satisfying all others. \qed
\end{proof}

\begin{thm}
  Axiom (\ref{rg3}) implies (\ref{rg4}) in a quantale
  where $\|$ distributes over arbitrary suprema.
\end{thm}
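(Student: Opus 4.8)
The plan is to trade the star for an explicit supremum and then push $\|$ through it. In a quantale the plus operation is the supremum $x^+ = \sum_{n \ge 1} x^n$ (the Kleene star coincides there with the least fixpoint $\sum_{n \ge 0} x^n$ of $\lambda y.\, 1 + x\cdot y$, so $x^+ = x\cdot x^\star = \sum_{n \ge 1} x^n$). Using the hypothesis that $\|$ distributes over arbitrary suprema, this gives
\begin{equation*}
  r \| x^+ \;=\; r \| \Bigl( \sum_{n \ge 1} x^n \Bigr) \;=\; \sum_{n \ge 1} \bigl( r \| x^n \bigr),
\end{equation*}
so it suffices to control each summand $r \| x^n$ individually.

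The second step is an induction on $n \ge 1$ establishing $r \| x^n = (r \| x)^n$. The base case $n = 1$ is immediate. For the step, write $x^{n+1} = x \cdot x^n$ and apply Axiom~(\ref{rg3}) to obtain $r \| x^{n+1} = (r \| x) \cdot (r \| x^n)$; the induction hypothesis then turns the right-hand side into $(r \| x) \cdot (r \| x)^n = (r \| x)^{n+1}$. This is the only place Axiom~(\ref{rg3}) is used, and the argument never needs $I$ to be closed under $\|$, since $r$ stays fixed throughout.

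Putting the two steps together,
\begin{equation*}
  r \| x^+ \;=\; \sum_{n \ge 1} \bigl( r \| x^n \bigr) \;=\; \sum_{n \ge 1} (r \| x)^n \;=\; (r \| x)^+,
\end{equation*}
where the last equality is again the quantale characterisation of the plus, now applied to the element $r \| x$. This establishes Axiom~(\ref{rg4}), in fact with equality rather than merely $\le$. I do not anticipate a genuine obstacle: the only point that warrants a sentence of care is the identification $x^+ = \sum_{n \ge 1} x^n$ in the quantale, everything else being a routine induction driven by distributivity and~(\ref{rg3}).
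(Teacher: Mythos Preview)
Your proposal is correct and follows essentially the same approach as the paper: represent $x^+$ as $\sum_{n\ge 1} x^n$ in the quantale, push $\|$ through the supremum, and use an induction on $n$ with Axiom~(\ref{rg3}) to show $r\|x^n = (r\|x)^n$. The paper's proof is identical in structure and also obtains equality rather than just the inequality.
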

\begin{proof}
  In a quantale $x^+$ can be defined as a sum of powers
  $x^+=\sum_{i\ge 1} x^i$ where $x^1=x$ and $x^{i+1}=x\cdot x^i$. By
  induction on $i$ we get $r\|x^i = (r\|x)^i$, hence
  \begin{align*}
    r\|x^+ = r\|\sum_{i\ge 1} x^i = \sum_{i\ge 1} r\|x^i = \sum_{i\ge 1} (r\|x)^i = (r\|x)^+.
  \end{align*}
  \qed
\end{proof}

In first-order Kleene algebras (\ref{rg3}) and (\ref{rg4}) are
independent, but it is impossible to find a counterexample with
Nitpick because it generates only finite counterexamples, and all
finite Kleene algebras are a forteriori quantales.

Jones quintuples can be encoded in this setting as
\begin{align}
r, g \vdash \{p\} x \{q\} \iff p\cdot(r\|x) \le q \land x \le g. \label{quin}
\end{align}
This means that program $x$ when constrained by a rely $r$, and
executed after $p$, behaves as $q$. Moreover, all behaviours of $x$
are included in its guarantee.

\begin{thm}
  The standard rely-guarantee inference rules can be derived with the
  above encoding, as shown in Figure \ref{fig:rgrules}.
\end{thm}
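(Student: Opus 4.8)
The plan is to derive each of the standard rely-guarantee rules by unfolding the encoding (\ref{quin}) into the two conjuncts $p\cdot(r\|x)\le q$ and $x\le g$, and then discharging each conjunct using the trioid/bi-Kleene axioms together with the interference axioms (\ref{rg1})--(\ref{rg4}). The \emph{guarantee} side $x\le g$ is typically immediate from isotonicity and the closure of $I$ under $\|$ and $\sqcap$, so the real work is always in the \emph{rely/Hoare} side $p\cdot(r\|x)\le q$. The elementary consequences already recorded --- in particular $1\le r$, $r\|r=r$, and $r\|x^+ = (r\|x)^+$ --- will be the workhorses.

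Concretely, I would proceed rule by rule. For the \textbf{sequential composition rule}, given $p\cdot(r\|x)\le s$ and $s\cdot(r\|y)\le q$, I compute $p\cdot(r\|(x\cdot y)) = p\cdot(r\|x)\cdot(r\|y)$ by (\ref{rg3}), then $\le s\cdot(r\|y)\le q$ by isotonicity; the guarantee $x\cdot y\le g$ follows from $x\le g$, $y\le g$ and (monotone closure giving) $g\cdot g\le g\|g\le g$ --- here I'd lean on $g\|g\le g$ from (\ref{rg1}) plus the interchange-free fact that $g\cdot g\le g$ must itself be established, likely via $1\le g$ and $g\cdot g = g\cdot g\|1\cdot 1$-type manipulations, or simply taken as part of what "interference constraint" buys us. For the \textbf{iteration/while rule}, I mirror the earlier derived while rule: from $p\cdot(r\|x)\le p$ I want $p\cdot(r\|x^\star)\le p$; using $r\|x^\star = r\|(1 + x^+) = (r\|1) + (r\|x^+) = r + (r\|x)^+$ (using $r\|1=r$, which needs checking) and then the Kleene induction axiom applied to $r\|x$, I get $p\cdot(r\|x)^\star\le p$, hence $p\cdot(r\|x^\star)\le p$ after absorbing the extra $r$ via $1\le r$. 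For the \textbf{parallel rule} --- the genuinely rely-guarantee-specific one --- given $r\|g_2\le r_1$, $r\|g_1\le r_2$, $p\cdot(r_1\|x)\le q_1$, $p\cdot(r_2\|y)\le q_2$, I want $p\cdot(r\|(x\|y))\le q_1\sqcap q_2$; the idea is $r\|(x\|y) = (r\|x)\|y \le (r\|x)\|g_2 \le (r\|x)\|r$... wait, more carefully: $r\|(x\|y)$, using commutativity and idempotence-flavoured reshuffling $r = r\|r$, becomes $(r\|x)\|(r\|y)$, and then $r\|y\le r\|g_... $ no --- the clean route is $r\|(x\|y)\le (r_1\|x)\|(r_2\|y)$ via $r\le r_1$-type consequences of $r\|g_2\le r_1$ (since $r\le r\|g_2\le r_1$ by (\ref{rg2})), giving $r\|(x\|y) = (r\|x)\|(r\|y)\le (r_1\|x)\|(r_2\|y)$, whence $p\cdot(r\|(x\|y))\le$ ... one then needs $p\cdot(a\|b)\le (p\cdot a)\sqcap(p\cdot b)$-style splitting, which without the interchange law requires $p$ to behave like a test --- so the parallel rule's Hoare side will need the assumption that $p$ is suitably idempotent/test-like, or a reformulation. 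The \textbf{consequence/weakening rules} and the \textbf{choice rule} are routine from distributivity and isotonicity.

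The main obstacle I anticipate is the \textbf{parallel rule}. Everything else is a short chain of inequalities in a trioid, but the parallel rule genuinely needs to combine two separately-verified Hoare conditions into one, and in the absence of the interchange law the only available glue is axiom (\ref{rg3}) for sequential splitting plus (\ref{rg2}) for absorbing relies into joins --- there is no direct way to push a precondition $p$ through a parallel composition. I expect the derivation either to require $p$ (and the postconditions) to be drawn from a sub-algebra of "state assertions" on which multiplication is meet-like, or to use the $\sqcap$ in the conclusion $q_1\sqcap q_2$ essentially, bounding $p\cdot(r\|(x\|y))$ above by each of $q_1$ and $q_2$ separately via the symmetric roles of $x$ and $y$ and then taking the meet. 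Pinning down exactly which minimal side-conditions on $p$ make this go through is where I'd spend the bulk of the effort; the rest is bookkeeping with the four interference axioms and their recorded consequences.
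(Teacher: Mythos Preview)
Your overall plan---unfold the quintuple into the Hoare conjunct $p\cdot(r\|x)\le q$ and the guarantee conjunct $x\le g$, then discharge each with isotonicity plus (\ref{rg1})--(\ref{rg4})---is exactly right and matches what the paper does (the paper itself gives no in-text proof, deferring to the Isabelle formalisation). Your sketches for sequential composition, choice, and iteration go through essentially as written. The hesitation about $g\cdot g\le g$ on the guarantee side of the sequential rule is unwarranted: instantiating (\ref{rg3}) with $x=y=1$ gives $g=g\|1=(g\|1)\cdot(g\|1)=g\cdot g$ directly, which is precisely the recorded elementary consequence $r\cdot r=r$.

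The genuine gap is in the parallel rule, but not where you locate it. You are right that one should bound the Hoare side by each $q_i$ separately and then take the meet; what you are missing is that \emph{no side condition on the preconditions is required} and there is no need to push $p$ through a parallel composition. The key move is to absorb the \emph{other} parallel component into the rely \emph{before} multiplying by the precondition. In the paper's formulation, for the $q_1$ bound one uses $y\le g_2\le r_1$ and $r_1\sqcap r_2\le r_1$ together with (\ref{rg1}):
\[
(r_1\sqcap r_2)\|x\|y \;=\; \bigl((r_1\sqcap r_2)\|y\bigr)\|x \;\le\; (r_1\|r_1)\|x \;\le\; r_1\|x,
\]
whence $(p_1\sqcap p_2)\cdot\bigl((r_1\sqcap r_2)\|x\|y\bigr)\le p_1\cdot(r_1\|x)\le q_1$. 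The $q_2$ bound is symmetric, and the guarantee $x\|y\le g_1\|g_2$ is immediate from isotonicity. In your own formulation with premise $r\|g_2\le r_1$, the same trick is even shorter: $r\|x\|y=(r\|y)\|x\le(r\|g_2)\|x\le r_1\|x$, and then $p\cdot(r\|x\|y)\le p\cdot(r_1\|x)\le q_1$. So the parallel rule is in fact a rather direct consequence of (\ref{rg1}) once you see that the side conditions $g_i\le r_j$ let each component swallow the other; the effort you planned to spend on test-like restrictions for $p$ is not needed at all.
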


\begin{figure}[hbt]
\centering
\begin{prooftree}
\RightLabel{\ Skip}
\AxiomC{$p\cdot r \le p$}
\UnaryInfC{$r, g \vdash \{p\}1\{p\}$}
\end{prooftree}

\begin{prooftree}
\RightLabel{\ Weakening}
\AxiomC{$r' \le r$}
\AxiomC{$g \le g'$}
\AxiomC{$p \le p'$}
\AxiomC{$r', g' \vdash \{p'\}x\{q'\}$}
\AxiomC{$q' \le q$}
\QuinaryInfC{$r, g \vdash \{p\}x\{q\}$}
\end{prooftree}

\begin{prooftree}
\RightLabel{\ Sequential}
\AxiomC{$r, g \vdash \{p\}x\{q\}$}
\AxiomC{$r, g \vdash \{q\}y\{s\}$}
\BinaryInfC{$r, g \vdash \{q\}x\cdot y\{s\}$}
\end{prooftree}

\begin{prooftree}
\RightLabel{\ Parallel}
\AxiomC{$r_1, g_2 \vdash \{p_1\}x\{q_1\}$}
\AxiomC{$g_1 \le r_2$}
\AxiomC{$r_1, g_2 \vdash \{p_2\}y\{q_2\}$}
\AxiomC{$g_2 \le r_1$}
\QuaternaryInfC{$r_1 \sqcap r_2, g_1 \| g_2 \vdash \{p_1 \sqcap q_2\}x\|y\{q_1 \sqcap q_2\}$}
\end{prooftree}

\begin{prooftree}
\RightLabel{\ Choice}
\AxiomC{$r, g \vdash \{p\}x\{q\}$}
\AxiomC{$r, g \vdash \{p\}y\{q\}$}
\BinaryInfC{$r, g \vdash \{p\}x + y\{q\}$}
\end{prooftree}

\begin{prooftree}
\RightLabel{\ Star}
\AxiomC{$p\cdot r \le p$}
\AxiomC{$r, g \vdash \{p\}x\{p\}$}
\BinaryInfC{$r, g \vdash \{p\}x^\star\{p\}$}
\end{prooftree}
\caption{Rely-guarantee inference rules}
\label{fig:rgrules}
\end{figure}

Thus (\ref{rg1}) to (\ref{rg4}), which are all necessary to derive these
rules, represent a minimal set of axioms from which these
inference rules can be derived.

If we add residuals to our algebra quintuples can be encoded in the
following way, which is equivalent to the encoding in Equation
(\ref{quin}).
\begin{align}
r, g \vdash \{p\} c \{q\} \iff c \le r/(p \rightarrow q) \sqcap g \label{refine}.
\end{align}
This encoding allows us to think in terms of program refinement, as in~\cite{hayes_refining_2013}, since
$r/(p \rightarrow q) \sqcap g$ defines the weakest program that when
placed in parallel with interference from $r$, and guaranteeing
interference at most $g$, goes from $p$ to $q$---a generic
specification for a concurrent program.

\section{Breaking Compositionality}
\label{sec:INT}

While the algebra in the previous section is adequate for deriving the
standard inference rules, its equality is too strong to capture many
interesting statements about concurrent programs. Consider the
congruence rule for parallel composition, which is inherent in the
algebraic approach:
\begin{align*}
x = y \implies x\|z = y\|z.
\end{align*}
This can be read as follows; if $x$ and $y$ are equal, then they must
be equal under all possible interferences from an arbitrary $z$. At
first, this might seem to preclude any fine-grained reasoning about
interference using purely algebra. This is not the case, but breaking
such inherent compositionality in just the right way to capture
interesting properties of interference requires extra work.

The obvious way of achieving this is to expand our rely-guarantee
algebra with an additional function $\pi : K \to K$ and redefining our
quintuples as,
\begin{align*}
r, g \vdash \{p\} x \{q\} \iff p\cdot (r\|c) \le_\pi q \land x \le g.
\end{align*}
Where $x \le_\pi y$ is $\pi(x) \le \pi(y)$. Since for any operator
$\bullet$ it is not required that
\begin{align*}
\pi(x) = \pi(y) \implies \pi(x \bullet z) = \pi(y \bullet z),
\end{align*}
we can break compositionality in just the right way, provided we chose
appropriate properties for $\pi$. These properties are extracted from
properties of the trace model, which will be explained in detail in
the next section. Many of those can be derived from the fact that, in
our model, $\pi = \lambda x.\; x \sqcap c$, where $c$ is healthiness
condition which filters out ill-defined traces. We do not list these
properties here. In addition $\pi$ must satisfy the properties
\begin{align}
x^\star &\le_\pi \pi(x)^\star, \label{con1}\\
x\cdot y &\le_\pi \pi(x)\cdot \pi(y), \label{con2}\\
z + x\cdot y \le_\pi y &\implies x^\star\cdot z \le_\pi y\label{con3},\\
z + y\cdot x \le_\pi y &\implies z\cdot x^\star \le_\pi y\label{con4}.
\end{align}

For any operator $\bullet$, we write $x \bullet_\pi y$ for the
operator $\pi(x\bullet y)$, and we write $x^\pi$ for
$\pi(x^\star)$.
\begin{thm}
$(\pi(K), +_\pi, \cdot_\pi, \hphantom{}^\pi,0,1)$ is a Kleene algebra.
\end{thm}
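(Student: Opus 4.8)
The plan is to treat $\pi$ as a kernel (interior) operator on the underlying distributive lattice and to exploit that, in the model, it preserves joins. From $\pi = \lambda x.\,x\sqcap c$ together with the unlisted model properties of $c$ (in particular distributivity of $(K,+,\sqcap)$ and $1\le c$) one reads off that $\pi$ is isotone, deflationary ($\pi(x)\le x$), idempotent ($\pi\circ\pi=\pi$), preserves $0$ and binary joins ($\pi(x+y)=\pi(x)+\pi(y)$), and fixes $0$ and $1$. Hence $\pi(K)$ is exactly the fixpoint set of $\pi$; it contains $0$ and $1$ and is closed under $+$ and $\sqcap$; on $\pi(K)$ the operation $+_\pi$ coincides with $+$, and $x\le_\pi y$ is simply $x\le y$. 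So $(\pi(K),+_\pi,0)$ is a commutative idempotent monoid with the inherited order, and isotonicity of $+_\pi$ and $\cdot_\pi$ is inherited from $K$ and $\pi$; the real work is the multiplicative structure and the star.

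The one genuinely new ingredient is an \emph{absorption lemma}: $\pi(\pi(x)\cdot\pi(y))=\pi(x\cdot y)$ for all $x,y\in K$. Here $\le$ is immediate from deflationarity and isotonicity of $\cdot$ and $\pi$, while $\ge$ is exactly axiom (\ref{con2}) unfolded through the definition of $\le_\pi$. Specialising to $\pi(x)=x$ or $\pi(y)=y$ yields $\pi(x\cdot\pi(y))=\pi(xy)=\pi(\pi(x)\cdot y)$. With this lemma the dioid axioms drop out: the unit laws hold since $\pi(1\cdot x)=\pi(x)=x$ and symmetrically $\pi(x\cdot 1)=x$ for $x\in\pi(K)$; associativity holds because for $x,y,z\in\pi(K)$ both $(x\cdot_\pi y)\cdot_\pi z$ and $x\cdot_\pi(y\cdot_\pi z)$ reduce via the absorption lemma and associativity of $\cdot$ in $K$ to $\pi(x\cdot y\cdot z)$; left distributivity is $x\cdot_\pi(y+_\pi z)=\pi(x(y+z))=\pi(xy+xz)=\pi(xy)+\pi(xz)=x\cdot_\pi y+_\pi x\cdot_\pi z$, using the absorption lemma, distributivity in $K$ and join-preservation of $\pi$, and right distributivity is symmetric; annihilation is immediate from $\pi(0)=0$. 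Thus $(\pi(K),+_\pi,\cdot_\pi,0,1)$ is a dioid.

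For the star I would proceed directly from (\ref{con1})--(\ref{con4}); in passing one also gets $\pi(x^\star)=\pi(\pi(x)^\star)$ from (\ref{con1}) and deflationarity, so that $\pi$ commutes with $^\star$ too. The left unfold law $1+_\pi x\cdot_\pi x^\pi\le x^\pi$ follows by rewriting $x\cdot_\pi x^\pi=\pi(x\cdot\pi(x^\star))=\pi(x\cdot x^\star)$ with the absorption lemma and then combining $1=\pi(1)$, join-preservation, isotonicity of $\pi$ and the left unfold law of $K$. For left induction, assume $z+_\pi x\cdot_\pi y\le y$ with $x,y,z\in\pi(K)$; this reads $z+\pi(xy)\le y$, whence $\pi(z+x\cdot y)=\pi(z)+\pi(xy)=z+\pi(xy)\le y=\pi(y)$, i.e. $z+x\cdot y\le_\pi y$; axiom (\ref{con3}) now gives $x^\star\cdot z\le_\pi y$, i.e. $\pi(x^\star z)\le y$, and $x^\pi\cdot_\pi z=\pi(\pi(x^\star)\cdot z)=\pi(x^\star z)$ by the absorption lemma, so $x^\pi\cdot_\pi z\le y$. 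Right induction is symmetric, using (\ref{con4}). These verify all the Kleene algebra axioms; the derivable laws ($1+x\cdot x^\star=x^\star$, right unfold) follow automatically.

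The crux is the absorption lemma and the accompanying observation that axioms (\ref{con1})--(\ref{con4}) say precisely that $\pi$ commutes up to $\le_\pi$ with $\cdot$, $^\star$ and the two induction rules, so that the closed substructure $\pi(K)$ inherits a Kleene algebra structure (indeed $\pi$ becomes a surjective Kleene algebra homomorphism onto it). Everything else is routine lattice and dioid bookkeeping, kept painless by the join-preservation of $\pi$. The one point to be careful about is not to conflate $\le$ with $\le_\pi$ while manipulating expressions inside the ambient algebra $K$: each reduction above silently passes between a ``$\pi$-statement in $K$'' and a genuine (in)equality in $\pi(K)$, and every such step must be licensed either by idempotence/deflationarity of $\pi$ or by one of the four compatibility axioms.
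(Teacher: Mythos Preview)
Your proposal is correct and follows the same approach the paper sketches: show that $\pi$ is a retraction ($\pi^2=\pi$), identify $\pi(K)$ with the fixpoints of $\pi$, and then verify closure and the Kleene algebra axioms using the compatibility properties (\ref{con1})--(\ref{con4}). The paper's proof is a two-line sketch that simply asserts one should ``check the closure conditions for all operations''; your write-up is a faithful and detailed elaboration of exactly that, with the absorption lemma $\pi(\pi(x)\cdot\pi(y))=\pi(x\cdot y)$ making explicit the one nontrivial computation hidden behind the paper's sentence.
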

\begin{proof}
  It can be shown that $\pi$ is a retraction, that is, $\pi^2 =
  \pi$. Therefore, $x\in \pi(K)$ iff $\pi(x) = x$. This condition can
  then be used to check the closure conditions for all operations. \qed
\end{proof}

We redefine our rely-guarantee algebra as a structure
$(K,I,+,\sqcap,\cdot,\|,\hphantom{}^\star,\pi,0,1)$ which, in addition
to the rules in Section \ref{sec:RG}, satisfies (\ref{con1}) to
(\ref{con4}).
\begin{thm}
  All rules in Figure \ref{fig:rgrules} can be derived in this
  algebra.
\end{thm}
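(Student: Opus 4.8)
The plan is to revisit each of the six rules in Figure~\ref{fig:rgrules} and check that the derivation from Section~\ref{sec:RG} survives the replacement of $\le$ by $\le_\pi$ in the quintuple encoding, using the axioms~(\ref{con1})--(\ref{con4}) exactly where the old derivation used the corresponding Kleene-algebra facts about $\le$. The key observation is that $\pi$ is a retraction (Theorem~4 above), so $\pi$ is isotone and $x \le y \implies x \le_\pi y$; hence every inference that the old proof performed by plain isotonicity of $+$, $\cdot$, $\|$ or $\sqcap$ still goes through, now read as an inference about $\le_\pi$. What genuinely needs new justification is precisely the two places where the original argument invoked iteration: the \emph{Star} rule (which used the right induction axiom of Kleene algebra) and, more subtly, the \emph{Sequential} rule (which used transitivity and composition of $\le$).

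First I would dispatch \emph{Skip}, \emph{Weakening} and \emph{Choice}, which are pure isotonicity/monotonicity arguments in the dioid reduct together with distributivity of $\|$ over $+$; none of these touch the star, so they transfer verbatim with $\le$ replaced by $\le_\pi$. Next, for \emph{Sequential}, from $p\cdot(r\|x)\le_\pi q$ and $q\cdot(r\|y)\le_\pi s$ I would use axiom~(\ref{rg3}) to write $r\|(x\cdot y) = (r\|x)\cdot(r\|y)$, then chain: $p\cdot(r\|(x\cdot y)) = (p\cdot(r\|x))\cdot(r\|y)$, and here I need $\pi\bigl(a\cdot b\bigr) \le_\pi \pi(a)\cdot b$ or something of that form to push the known bound $p\cdot(r\|x)\le_\pi q$ inside the composition. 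This is exactly what~(\ref{con2}) gives, combined with isotonicity of $\pi$ and of $\cdot$; the guarantee conjunct $x\cdot y \le g$ follows from $x\le g$, $y\le g$ and $g\cdot g \le g$ (itself a consequence of $r\cdot r = r$ noted after the axioms, since $I$ is closed under the relevant operations). For the \emph{Parallel} rule I would reuse the old argument: it rests on~(\ref{rg1}), (\ref{rg2}), commutativity and associativity of $\|$, and isotonicity of $\sqcap$, plus $g_1\|g_2\in I$ by closure; again no iteration, so replacing $\le$ by $\le_\pi$ throughout and inserting $\pi$-isotonicity where composition of inequalities was used suffices.

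The main obstacle, as expected, is the \emph{Star} rule, and here the work is to re-run the Kleene-algebra derivation of the generalised while rule inside $(\pi(K),+_\pi,\cdot_\pi,{}^\pi,0,1)$ using Theorem~3. Concretely, from $p\cdot r\le p$ and $p\cdot(r\|x)\le_\pi p$ I want $p\cdot(r\|x^\star)\le_\pi p$. I would first use axiom~(\ref{rg4}) (and the noted consequence $r\|x^+ = (r\|x)^+$) together with~(\ref{con1}) to relate $r\|x^\star$ to $(r\|x)^\star$ up to $\le_\pi$; the point of~(\ref{con1}) is that $x^\star \le_\pi \pi(x)^\star$ lets the outer $\pi$ absorb the star so that the induction axiom~(\ref{con4}) becomes applicable. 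Then, setting $y = p$ and using $p\cdot r\le p$ to handle the "$r$ on the outside" bookkeeping, I apply the $\pi$-induction axiom~(\ref{con4}) (the right-induction form, matching the right-induction step in the original while-rule derivation) to conclude $p\cdot(r\|x)^\star \le_\pi p$, and finish by transferring back to $r\|x^\star$. The delicate point is lining up the associativity/splitting of $r$ across the iterate with the shape $z + y\cdot x \le_\pi y \implies z\cdot x^\star \le_\pi y$ demanded by~(\ref{con4}); this is where axiom~(\ref{rg3}) (splitting $r$ across a sequential composition) does the essential bridging, and it is the step I would write out in full. The guarantee conjunct $x^\star \le g$ follows from $x\le g$, $1\le g$ (noted as $1\le r$ for $r\in I$), $g\cdot g\le g$, and star induction in the underlying Kleene algebra, all of which are available since $g\in I$.
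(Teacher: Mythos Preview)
Your proposal is correct and matches the paper's own argument, which is literally the one-line remark ``their proofs remain the same, \emph{mutatis mutandis}'': you have simply spelled out what those mutations are, rule by rule, and identified that (\ref{con2}) handles the sequential step while (\ref{con4}) together with (\ref{rg3})/(\ref{rg4}) handles the star step. One small slip: being a retraction does not by itself give isotonicity of $\pi$; that comes instead from the unlisted properties inherited from $\pi = \lambda x.\, x \sqcap c$, but this does not affect the substance of your argument.
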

Moreover their proofs remain the same, mutatis mutandis.

\section{Finite Language Model}
\label{sec:Model}

We now construct a finite language model satisfying the axioms in
Section \ref{sec:RG} and \ref{sec:INT}. Restricting our attention to
finite languages means we do not need to concern ourselves with
termination side-conditions, nor do we need to worry about additional
restrictions on parallel composition, e.g. fairness. However, all the
results in this section can be adapted to potentially infinite
languages, and our Isabelle/HOL formalisation already includes general
definitions by using coinductively defined lazy lists to represent
words, and having a weakly-fair shuffle operator for such infinite
languages.

We consider languages where the alphabet contains state pairs of the
form $(\sigma_1,\sigma_2) \in \Sigma^2$. A word in such a language is
\emph{consistent} if every such pair in a word has the same first
state as the previous transition's second state. For example,
$(\sigma_1,\sigma_2)(\sigma_2,\sigma_3)$ is consistent, while
$(\sigma_1,\sigma_2)(\sigma_3,\sigma_3)$ is inconsistent. Sets of
consistent words are essentially \emph{Aczel
  traces}~\cite{boer_formal_1999}, but lack the usual process
labels. We denote the set of all consistent words by $C$ and define
the function $\pi$ from the previous section as $\lambda X.\; X \cap
C$ in our model.

Sequential composition in this model is language product, as per
usual. Concurrent composition is the shuffle product defined in
Section \ref{sec:KA}. The shuffle product is associative, commutative,
and distributes over arbitrary joins. Both products share the same
unit, $\{\epsilon\}$ and zero, $\emptyset$. In Isabelle proving
properties of shuffle is surprisingly tricky (especially if one
considers infinite words). For a in-depth treatment of the shuffle
product see~\cite{mateescu_shuffle-like_1997}.

\begin{thm}
$(\mathcal{P}((\Sigma^2)^\star), \cup, \cdot,\|, \emptyset,
  \{\epsilon\})$ forms a trioid.
\end{thm}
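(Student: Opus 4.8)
The plan is to check the two halves of the trioid definition separately, noting first that a trioid imposes no axiom relating $\cdot$ and $\|$, so no mixed laws need to be verified. The dioid part, $(\mathcal{P}((\Sigma^2)^\star), \cup, \cdot, \emptyset, \{\epsilon\})$, is exactly the standard language model recalled in Section~\ref{sec:KA}, here over the alphabet $\Sigma^2$: union is associative, commutative and idempotent with least element $\emptyset$; language product is associative with unit $\{\epsilon\}$; product distributes over union on both sides; and $\emptyset$ annihilates. None of this depends on the letters being state pairs, so I would simply invoke this well-known fact.

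The real work is to show that $(\mathcal{P}((\Sigma^2)^\star), \cup, \|, \emptyset, \{\epsilon\})$ is a \emph{commutative} dioid, i.e.\ that the shuffle product is commutative and associative with unit $\{\epsilon\}$ and that it distributes over arbitrary unions — the latter subsuming binary distributivity and, as the empty-union case, the annihilation law $X \| \emptyset = \emptyset = \emptyset \| X$. Commutativity and the unit laws reduce to the word level: $u \| v = v \| u$ and $\epsilon \| u = \{u\} = u \| \epsilon$ follow by a straightforward structural induction from the inductive definition of shuffle in Section~\ref{sec:KA}, and they are inherited by the pointwise lifting to languages. Distributivity over arbitrary unions is essentially immediate, because that lifting is itself a union ranging over pairs of words and hence commutes with unions in either argument.

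The main obstacle is associativity, $(X \| Y) \| Z = X \| (Y \| Z)$. Since shuffle distributes over unions, this reduces to the word-level identity $(u \| v) \| w = u \| (v \| w)$ as sets of words, which I would prove by well-founded induction on $|u| + |v| + |w|$. If any of the three words is empty, the claim follows from the unit law; otherwise, writing $u = a u_0$, $v = b v_0$, $w = c w_0$ and unfolding the clause $as \| bt = a(s \| bt) \cup b(as \| t)$ on both sides yields a union of terms of the forms $a(\cdots)$, $b(\cdots)$, $c(\cdots)$, each built from a shuffle of three strictly shorter words to which the induction hypothesis applies; the two sides then coincide after reassociating these finite unions. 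The case analysis is routine on paper but fiddly to carry out in full — and, as the authors remark, markedly more so for the infinite-word variant with its weakly-fair shuffle — so the genuine content of the theorem lies in discharging it cleanly in Isabelle.
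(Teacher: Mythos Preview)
Your proposal is correct and follows the standard route. The paper does not give a textual proof of this theorem at all: immediately before stating it, it simply asserts that the shuffle product is associative, commutative, distributes over arbitrary joins, and shares the unit $\{\epsilon\}$ and zero $\emptyset$ with language product, remarks that establishing these facts in Isabelle is ``surprisingly tricky'', and points to the cited reference on shuffle-like operations for a detailed treatment. Your decomposition into the standard language dioid plus a commutative shuffle dioid, with word-level associativity proved by induction on $|u|+|v|+|w|$, is exactly the argument one would carry out to discharge the result, so there is nothing to fault.
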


The rely-guarantee elements in this model are sets containing all the
words which can be built from some set of state pairs in
$\Sigma^2$. We define a function $\langle R\rangle$ which
lifts a relation $R$ to a language containing words of length one for
each pair in $R$. The set of rely-guarantee conditions $I$ is then defined
as $\{r.\; \exists R. r = \langle R\rangle^\star\}$.

\begin{thm}
$(\mathcal{P}((\Sigma^2)^\star), I, \cup, \cdot,\|,
  \hphantom{}^\star,\pi,\emptyset, \{\epsilon\})$ is a rely-guarantee algebra.
\end{thm}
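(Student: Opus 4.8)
The plan is to reduce the claim to three verification tasks, after which the remaining components of the structure come essentially for free. By the preceding theorem $(\mathcal{P}((\Sigma^2)^\star),\cup,\cdot,\|,\emptyset,\{\epsilon\})$ is already a trioid; since the full powerset of a free monoid ordered by inclusion is a quantale it is in particular a Kleene algebra under $\cup$, $\cdot$ and ${}^\star$, and it is a complete distributive lattice under $\cup$ and $\cap$ (which plays the role of $\sqcap$). Hence the distributive-lattice, trioid and bi-Kleene reducts required of a rely-guarantee algebra hold. It then remains to check (i) that $I=\{r:\exists R.\ r=\langle R\rangle^\star\}$ satisfies (\ref{rg1})--(\ref{rg4}), (ii) that $\|$ and $\sqcap$ are closed on $I$, and (iii) that $\pi=\lambda X.\ X\cap C$ satisfies (\ref{con1})--(\ref{con4}).

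For (i) and (ii) the key observation is that each $r\in I$ equals $A^\star$ for the letter set $A=\langle R\rangle$, i.e.\ $r$ is the set of \emph{all} finite words over the transitions in $R$. Such a language contains $\epsilon$, satisfies $r\cdot r=r$, and is closed under the factors of any decomposition $w=w_1\cdots w_n$, since each $w_i$ is again a word over $A$. Axiom (\ref{rg1}) then holds because a shuffle of two words over $A$ is again a word over $A$; (\ref{rg2}) holds because $\{\epsilon\}\subseteq r'$ and $w\|\epsilon=\{w\}$; and (\ref{rg4}) follows from the earlier theorem that (\ref{rg3}) implies (\ref{rg4}) in a quantale in which $\|$ distributes over arbitrary suprema, which is exactly the present situation. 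For the equality (\ref{rg3}), given a shuffle of $\rho\in r$ into a word $uv$ with $u\in x$ and $v\in y$, one cuts it just after the last letter of $u$: the prefix is a shuffle of some $\rho_1$ into $u$, the suffix a shuffle of $\rho_2$ into $v$, with $\rho=\rho_1\rho_2$, and factor-closure gives $\rho_1,\rho_2\in r$, so the word lies in $(r\|x)\cdot(r\|y)$; the reverse inclusion needs only $r\cdot r=r$. For (ii) one checks $\langle R\rangle^\star\|\langle R'\rangle^\star=\langle R\cup R'\rangle^\star$ (partition the letters of a word over $R\cup R'$ according to membership in $R$ or $R'$) and $\langle R\rangle^\star\cap\langle R'\rangle^\star=\langle R\cap R'\rangle^\star$, both of which stay in $I$.

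For (iii) the analogous key observation is that the set $C$ of consistent words is closed under the factors of any decomposition: consistency is a purely local constraint on adjacent letters, so every factor of a consistent word — in particular every prefix and suffix — is consistent. Since $\pi$ is intersection with the fixed set $C$ it distributes over $\cup$ and is a retraction. Axioms (\ref{con1}) and (\ref{con2}) then follow by decomposing a consistent word of $X^\star$, respectively of $X\cdot Y$, into factors, each of which is automatically consistent and hence lies in $\pi(X)$, respectively in $\pi(X)$ and $\pi(Y)$; and (\ref{con3}), (\ref{con4}) follow from the same factorisation principle by induction on the number of $X$-factors, using $\pi(Y)\subseteq Y$. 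Alternatively, once (\ref{con1})--(\ref{con4}) are established, the theorem that $\pi(K)$ is a Kleene algebra supplies the induction laws directly inside $\pi(K)=\mathcal{P}(C)$.

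The main obstacle is the equality in (\ref{rg3}): its $\subseteq$ direction is precisely where the special syntactic shape of the elements of $I$ is needed — an arbitrary language cannot be re-split around a concatenation boundary, since $\|$ does not distribute over $\cdot$ in general — and the $\epsilon$ base cases of the inductive shuffle definition have to be treated with some care. Everything else is elementary bookkeeping on words, and restricting to finite words means no termination or fairness side-conditions intrude.
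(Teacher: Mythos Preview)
Your argument is correct. The paper does not give a proof of this theorem in the text; the verification is deferred entirely to the accompanying Isabelle/HOL development, so there is nothing explicit to compare against. Your decomposition into (i) axioms (\ref{rg1})--(\ref{rg4}) for $I$, (ii) closure of $I$ under $\|$ and $\sqcap$, and (iii) axioms (\ref{con1})--(\ref{con4}) for $\pi$ is the natural one, and the two structural facts you isolate --- that $\langle R\rangle^\star$ is closed under factors (which drives the equality (\ref{rg3})) and that the set $C$ of consistent words is closed under factors (which drives (\ref{con1})--(\ref{con4})) --- are exactly what any proof must rest on. The inductive argument for (\ref{con3}), showing $(x^n\cdot z)\cap C\subseteq y\cap C$ by peeling off the leading $x$-factor and using suffix-consistency, is sound, and invoking the earlier quantale theorem to get (\ref{rg4}) from (\ref{rg3}) is legitimate since the paper records that shuffle distributes over arbitrary joins. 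One small addendum: the paper alludes to further unlisted properties of $\pi$ beyond (\ref{con1})--(\ref{con4}), all derivable from the shape $\pi(X)=X\sqcap C$; these are precisely the retraction and join-distributivity facts you already mention in passing, so nothing is missing.
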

Since $\langle R\rangle$ is atomic, it satisfies several useful properties, such as,
\begin{align*}
\langle R\rangle^\star \| \langle S\rangle = \langle R\rangle^\star; \langle S\rangle; \langle R\rangle^\star,
\qquad \langle R\rangle^\star \| \langle S\rangle^\star = (\langle R\rangle^\star; \langle S\rangle^\star)^\star.
\end{align*}

To demonstrate how this model works, consider the graphical
representation of a language shown below.

\begin{center}
\begin{tikzpicture}[x=1cm,auto]
  \node (center) {};
  \node [elem] (s12) {$\sigma_2$};
  \node [elem, above of=s12] (s11) {$\sigma_1$};
  \node [elem, below of=s12] (s13) {$\sigma_3$};
  \node [right of=s13, node distance=0.75cm] (r1) {};

  \node [elem, right of=s12, node distance=1.5cm] (s22) {$\sigma_2$};
  \node [elem, above of=s22] (s21) {$\sigma_1$};
  \node [elem, below of=s22] (s23) {$\sigma_3$};
  \node [right of=s23, node distance=0.75cm] (r2) {};

  \node [elem, right of=s22, node distance=1.5cm] (s32) {$\sigma_2$};
  \node [elem, above of=s32] (s31) {$\sigma_1$};
  \node [elem, below of=s32] (s33) {$\sigma_3$};
  \node [right of=s33, node distance=0.75cm] (r3) {};

  \node [elem, right of=s32, node distance=1.5cm] (s42) {$\sigma_2$};
  \node [elem, above of=s42] (s41) {$\sigma_1$};
  \node [elem, below of=s42] (s43) {$\sigma_3$};
  \node [right of=s43, node distance=0.75cm] (r4) {};

  \path [line] (s11) -- (s21);
  \path [line] (s21) -- (s32);
  \path [line] (s32) -- (s43);

  \path [rel] (s11) -- (s22);
  \path [rel] (s12) -- (s22);
  \path [rel] (s23) -- (s32);
\end{tikzpicture}
\end{center}
The language contains the following six words
\begin{align*}
(\sigma_1,\sigma_1)(\sigma_1,\sigma_2)(\sigma_2,\sigma_3), \qquad
(\sigma_1,\sigma_2)(\sigma_1,\sigma_2)(\sigma_2,\sigma_3),\\
(\sigma_2,\sigma_2)(\sigma_1,\sigma_2)(\sigma_2,\sigma_3), \qquad
(\sigma_1,\sigma_1)(\sigma_3,\sigma_2)(\sigma_2,\sigma_3),\\
(\sigma_1,\sigma_2)(\sigma_3,\sigma_2)(\sigma_2,\sigma_3), \qquad
(\sigma_2,\sigma_2)(\sigma_3,\sigma_2)(\sigma_2,\sigma_3),
\end{align*}
where only the first,
$(\sigma_1,\sigma_1)(\sigma_1,\sigma_2)(\sigma_2,\sigma_3)$ is
consistent. This word is highlighted with solid arrows in the
diagram above. Now if we shuffle the single state pair $(\sigma_2,
\sigma_3)$ into the above language, we might end up with a language as
below:

\begin{center}
\begin{tikzpicture}[x=1cm,auto]
  \node [elem] (s12) {$\sigma_2$};
  \node [elem, above of=s12] (s11) {$\sigma_1$};
  \node [elem, below of=s12] (s13) {$\sigma_3$};
  \node [right of=s13, node distance=0.75cm] (r1) {};

  \node [elem, right of=s12, node distance=1.5cm] (i2) {$\sigma_2$};
  \node [elem, above of=i2] (i1) {$\sigma_1$};
  \node [elem, below of=i2] (i3) {$\sigma_3$};
  \node [right of=i3, node distance=0.75cm] (ri) {};

  \node [elem, right of=i2, node distance=1.5cm] (s22) {$\sigma_2$};
  \node [elem, above of=s22] (s21) {$\sigma_1$};
  \node [elem, below of=s22] (s23) {$\sigma_3$};
  \node [right of=s23, node distance=0.75cm] (r2) {};

  \node [elem, right of=s22, node distance=1.5cm] (s32) {$\sigma_2$};
  \node [elem, above of=s32] (s31) {$\sigma_1$};
  \node [elem, below of=s32] (s33) {$\sigma_3$};
  \node [right of=s33, node distance=0.75cm] (r3) {};

  \node [elem, right of=s32, node distance=1.5cm] (s42) {$\sigma_2$};
  \node [elem, above of=s42] (s41) {$\sigma_1$};
  \node [elem, below of=s42] (s43) {$\sigma_3$};
  \node [right of=s43, node distance=0.75cm] (r4) {};

  \path [rel] (s11) -- (i1);
  \path [rel] (s21) -- (s32);
  \path [line] (s32) -- (s43);
  \path [line] (i2) -- (s23);

  \path [line] (s11) -- (i2);
  \path [line] (s12) -- (i2);
  \path [line] (s23) -- (s32);
\end{tikzpicture}
\end{center}

By performing this shuffle action, we no longer have a consistent word
from $\sigma_1$ to $\sigma_3$, but instead a consistent word from
$\sigma_2$ to $\sigma_3$ and $\sigma_1$ to $\sigma_3$. These new
consistent words were constructed from previously inconsistent
words---the shuffle operator can generate many consistent words from
two inconsistent words. If we only considered consistent words, \`a la
Aczel traces, we would be unable to define such a shuffle operator
directly on the traces themselves, and would instead have to rely on
some operational semantics to generate traces.

\section{Enriching the Model}

To model and verify programs we need additional concepts such as tests
and assignment axioms. A \emph{test} is any language $P$ where $P \le
\langle \Id\rangle$. We write $\test(P)$ for $\langle
\Id_P\rangle$. In Kleene algebra the sequential composition of two
tests should be equal to their intersection. However, the traces
$\test(P); \test(Q)$ and $\test(P \cap Q)$ are incomparable, as all
words in the former have length two, while all the words in the latter
have length one. To overcome this problem, we use the concepts of
\emph{stuttering} and \emph{mumbling},
following~\cite{brookes_full_1993}
and~\cite{dingel_refinement_2002}. We inductively generate the
\emph{mumble language} $w^\dagger$ for a word $w$ in a language over
$\Sigma^2$ as follows: Assume $\sigma_1,\sigma_2,\sigma_3 \in \Sigma$
and $u,v,w \in (\Sigma^2)^\star$. First, $w \in w^\dagger$. Secondly,
if $u(\sigma_1,\sigma_2)(\sigma_2,\sigma_3)v \in w^\dagger$ then
$u(\sigma_1,\sigma_3)v \in w^\dagger$. This operation is lifted to
languages in the obvious way as
\begin{align*}
X^\dagger = \bigcup\{x^\dagger.\; x \in X\}.
\end{align*}
Stuttering is represented as a rely condition $\langle \Id\rangle^\star$
where $\Id$ is the identity relation. Two languages $X$ and $Y$ are
equal under stuttering if $\langle \Id\rangle^\star \| X =_\pi \langle
\Id\rangle^\star \| Y$.

With mumbling we now have that
\begin{align*}
\test(P \cap Q) \le_\pi \test(P); \test(Q)
\end{align*}
as the longer words in $\test(P);\test(Q)$ can be mumbled down into
the shorter words of $\test(P\cap Q)$, whereas stuttering gives us the opposite direction,
\begin{align*}
\langle \Id\rangle^\star \| (\test(P);\test(Q)) \le_\pi \langle \Id\rangle^\star \| \test(P \cap Q).
\end{align*}
We henceforth assume that all languages are
implicitly mumble closed.

Using tests, we can encode if statements and while loops
\begin{align*}
\mathsf{if }\;\; P \text{ \{ } X \text { \} \textsf{else} \{ } Y \text{ \}} &= \test(P); X + \test(- P); Y,\\
\mathsf{while}\;\; P \text{ \{ } X \text{ \}} &= (\test(P);X)^\star;\test(- P).
\end{align*}
Next, we define the operator $\edn(P)$ which contains all the
words which end in a state satisfying $P$. Some useful properties of $\edn$ include
\begin{gather*}
\edn(P); \test{Q} \le_\pi \edn(P \cap Q), \qquad
\test(P) \le \edn(Q),\\
\text{range}(\Id_P \circ R) \le P \implies \edn(P); \langle R\rangle^\star \le_\pi \edn(P).
\end{gather*}

In this model, assignment is defined as
\begin{align*}
  x := e \ =\ \bigcup v.\,\test\{\sigma.\,\eval(\sigma,e) = v\} \cdot x \leftarrow v
\end{align*}
where $x \leftarrow v$ denotes the atomic command which assigns the
value $v$ to $x$. The $\eval$ function evaluates an expression $e$ in
the state $\sigma$. Using this definition we derive the assignment
rule
\begin{align*}
&\unchanged(\mathsf{vars}(e)) \cap \preserves(P) \cap \preserves(P[x/e]),\\
&\unchanged(- \{x\})\\
&\vdash \{\edn(P)\}\ x := e\ \{\edn(P[x/e])\}.
\end{align*}

The rely condition states the following: First, the environment is not
allowed to modify any of the variables used when evaluating $e$,
i.e. those variables must remain unchanged. Second, the
environment must preserve the precondition. Third, the postcondition
of the assignment statement is also preserved. In turn, the assignment
statement itself guarantees that it leaves every variable other than
$x$ unchanged. Preserves and unchanged are defined as
\begin{align*}
\preserves(P) &= \langle \{(\sigma,\sigma').\; P(\sigma) \implies P(\sigma')\}\rangle^\star,\\
\unchanged(X) &= \langle \{(\sigma,\sigma').\; \forall v\in X.\; \sigma(v) = \sigma'(v)\}\rangle^\star.
\end{align*}
We also defined two futher rely conditions, increasing and decreasing,
which are defined much like unchanged except they only require that
variables increase or decrease, rather than stay the same. We can
easily define other useful assignment rules---if we know properties about
$P$ and $e$, we can make stronger guarantees about what $x := e$ can
do. For example the assignment $x := x - 2$ can also guarantee that
$x$ will always decrease.

\section{Examples}
\label{sec:Examples}

To demonstrate how the parallel rule behaves, consider the following
simple statement, which simply assigns two variables in parallel:
\begin{align*}
&\langle\Id\rangle^\star, \langle\top\rangle^\star \vdash \{\edn(x = 2 \land y = 2 \land z = 5)\}\\
&\qquad x := x + 2 \;\|\; y := z\\
&\{\edn (x = 4 \land y = 5 \land z = 5)\}.
\end{align*}
The environment $\langle\Id\rangle^\star$ is only giving us stuttering
interference. Since we are considering this program in isolation, we
make no guarantees about how this affects the environment. To apply
the parallel rule from Figure \ref{fig:rgrules}, we weaken or
strengthen the interference constrains and pre/postcondition as needed
to fit the form of the parallel rule.

First, we weaken the rely condition to $\unchanged\{x\}\ \sqcap\ \unchanged\{y,z\}$. Second we strengthen the guarantee condition to
$\unchanged\{y,z\}\ \|\ \unchanged\{x\}$. When we apply the parallel
rule each assignment's rely will become the other assignment's
guarantee. Finally, we split the precondition and postcondition into
$\edn(x = 2)\ \sqcap\ \edn(y = 2 \land z = 5)$ and $\edn(x =
4)\ \sqcap\ \edn(y = 5 \land z = 5)$ respectively. Upon applying the
parallel rule, we obtain two trivial goals
\begin{align*}
&\langle\unchanged\{x\}\rangle^\star, \langle\unchanged\{y,z\}\rangle^\star \vdash \{\edn(x = 2)\}\;x := x + 2 \;\{\edn (x = 4)\},\\
&\langle\unchanged\{y,z\}\rangle^\star, \langle\unchanged\{x\}\rangle^\star \vdash \{\edn(y = 2 \land z = 5)\}\\
&\qquad y := z \\
&\{\edn (y = 5 \land z = 5)\}.
\end{align*}

Figure \ref{fig:findp} shows the FINDP program, which has been
used by numerous authors
e.g.~\cite{owicki_axiomatic_1975,jones_development_1981,de_roever_concurrency_2001,hayes_refining_2013}. The
program finds the least element of an array satisfying a predicate
$P$. The index of the first element satisfying $p$ is placed in the
variable $f$. If no element of the array satisfies $P$
then $f$ will be set to the length of the array. The program has two
subprograms, $A$ and $B$, running in parallel, one of which searches
the even indices while the other searches the odd indices. A speedup
over a sequential implementation is achieved as $A$ will terminate
when $B$ finds an element of the array satisfying $P$ which is less
than $i_A$.
\begin{figure}
\[
\begin{aligned}
&f_A := \mathsf{len}(\mathsf{array});\\
&f_B := \mathsf{len}(\mathsf{array});\\
&\left(
\begin{aligned}
&i_A = 0\\
&\text{\textsf{while} } i_A < f_A \land i_A < f_B \text{ \{ }\\
&\qquad\text{\textsf{if} } P(\mathsf{array}[i_A]) \text{ \{}\\
&\qquad\qquad f_A := i_A\\
&\qquad\text{\} \textsf{else} \{}\\
&\qquad\qquad i_A := i_A + 2\\
&\qquad\text{\}}\\
&\text{\}}
\end{aligned}
\middle\|
\begin{aligned}
&i_B = 1\\
&\text{\textsf{while} } i_B < f_A \land i_B < f_B \text{ \{}\\
&\qquad\text{\textsf{if} } P(\mathsf{array}[i_B]) \text{ \{}\\
&\qquad\qquad f_B := i_B\\
&\qquad\text{\} \textsf{else} \{}\\
&\qquad\qquad i_B := i_B + 2\\
&\qquad\text{\}}\\
&\text{\}}
\end{aligned}
\right);\\
&f = \text{min}(f_A,f_B)
\end{aligned}
\]
\caption{FINDP Program}
\label{fig:findp}
\end{figure}

Here, we only sketch the correctness proof, and comment on its
implementation in Isabelle. We do not attempt to give a detailed
proof, as this has been done many times previously.

To prove the correctness of FINDP, we must show that
\begin{align*}
\text{FINDP} \le_\pi \edn(\mathsf{leastP}(f)) + \edn(f = \mathsf{len}(\text{array})),
\end{align*}
where $\mathsf{leastP(f)}$ is the set of states where $f$ is the least
index satisfying $P$, and $f = len(array)$ is the set of states where
$f$ is the length of the array. In other words, either we find the
least element, or no elements in the array satisfy $P$, in which case
$f$ remains the same as the length of the array.

To prove the parallel part of the program, subprogram $A$ guarantees
that it does not modify any of the variables used by subprogram $B$,
except for $f_A$, which it guarantees will only ever
decrease. Subprogram $B$ makes effectively the same guarantee to
$A$. Under these interference constraints we then prove that $A$ or
$B$ will find the lowest even or odd index which satisfies $P$
respectively---or they do not find it, in which case $f_A$ or $f_B$
will remain equal to the length of the array.

Despite the seemingly straightforward nature of this proof, it turns
out to be surprisingly difficult in Isabelle. Each atomic step
needs to be shown to satisfy the guarantee of its containing
subprogram, as well as any goals relating to its pre and post
conditions. This invariably leads to a proliferation of many small
proof goals, even for such a simple program. More work must be done
to manage the complexity of such proofs within interactive theorem
provers.

\section{Conclusion}

We have introduced variants of semirings and Kleene algebras intended
to model rely-guarantee and interference based reasoning. We have
developed an interleaving model for these algebras which uses familiar
concepts from traces and language theory. This theory has been
implemented in the Isabelle/HOL theorem prover, providing a solid
mathematical basis on which to build a tool for mechanised refinement
and verification tasks. In line with this aim, we have applied
our formalisation to two simple examples.

This implementation serves as a basis from which further interesting
aspects of concurrent programs, such as non-termination and
synchronisation can be explored. As mentioned in Section
\ref{sec:Model}, some of the work needed to implement this we have
already done in Isabelle.

Algebra plays an important role in our development. First, it
allowed us to derive inference rules rapidly and with little proof
effort. Second, it yields an abstract layer at which many properties
that would be difficult to prove in concrete models can be verified
with relative ease by equational reasoning. Third, as pointed out in
Section \ref{sec:KA}, some fragments of the algebras considered are
decidable. Therefore, decision procedures for some aspects of
rely-guarantee reasoning can be implemented in interactive theorem
proving tools such as Isabelle.

The examples from Section \ref{sec:Examples} confirm previous
evidence~\cite{nieto_rely-guarantee_2003} that even seemingly
straightforward concurrency verification tasks can be tedious and complex. It is
too early to draw informed conclusions, but while part of this complexity
may be unavoidable, more advanced models and proof automation are
needed to overcome such difficulties. Existing work on combining
rely-guarantee with separation logic~\cite{vafeiadis_modular_2008} may
prove useful here. Our language model is sufficiently generic such
that arbitrary models of stores may be used, including those common in
separation logic, which have already been implemented in Isabelle~\cite{klein_mechanised_2012}.

In addition, algebraic approaches to separation logic have already
been introduced. Examples are the separation algebras
in~\cite{calcagno_local_2007}, and algebraic separation
logic~\cite{dang_algebraic_2011}. More recently, concurrent Kleene
algebras have given an algebraic account of some aspects of concurrent
separation logic~\cite{hoare_concurrent_2011,hoare_locality_2011}.

\paragraph*{Acknowledgements.} The authors would like to thank Brijesh
Dongol and Ian Hayes for inspiring discussions on concurrency
verification and the rely-guarantee method. The first author
acknowledges funding from an EPSRC doctoral fellowship. The second
author is supported by CNPq Brazil. The third
author acknowledges funding by EPSRC grant EP/J003727/1.

\bibliography{paper}{}
\bibliographystyle{plain}

\end{document}